\newcommand{\OPT}{\mathsf{OPT}}
\newtheorem{lemma}{Lemma}
\newtheorem{theorem}{Theorem}
\newtheorem{corollary}{Corollary}
\title{Two 6-approximation Algorithms for the Stochastic Score Classification Problem\footnote{This is an independent work of the recent submission of Plank and Schewior~\cite{https://doi.org/10.48550/arxiv.2211.14082}, in which they provide a 5.8-approximation with similar ideas and a 6-approximation which is the second algorithm in this work. However, the analyses in the two papers are quite different. I am grateful to Lisa Hellerstein, Devorah Kletenik, R. Teal Witter for valuable discussions and suggestions.}}
\author{Naifeng Liu\footnote{CUNY Graduate Center, New York NY 10016, USA}}
\date{}
\begin{document}
\maketitle

\begin{abstract}
    We study the arbitrary cost case of the unweighted Stochastic Score Classification (SSClass) problem.
    % which is also the Stochastic Boolean Function Evaluation problem for symmetric Boolean functions
    We show two constant approximation algorithms and both algorithms are 6-approximation non-adaptive algorithms with respect to the optimal adaptive algorithm. The first algorithm uses a modified round-robin approach among three sequences, which is inspired by a recent result on the unit cost case of the SSClass problem~\cite{DBLP:journals/algorithmica/GrammelHKL22}. The second algorithm is originally from the work of Gkenosis et al. \cite{DBLP:conf/esa/GkenosisGHK18}. In our work, we successfully improve its approximation factor from $2(B-1)$ to $6$. Our analysis heavily uses the relation between computation and verification of functions, which was studied in the information theory literature \cite{DBLP:conf/allerton/AcharyaJO11,DBLP:conf/isit/DasJOPS12}.
\end{abstract}

\section{Introduction}

Consider the unweighted Stochastic Score Classification problem where we want to minimize the expected cost of evaluating a symmetric Pseudo-Boolean function $f(x_1,\dots,x_n):\{0,1\}^n\rightarrow \{1,\dots,B\}$ on an initially unknown input. The value of each $x_i$ is unknown and can only be learned by paying a positive cost $c_i$. We also know that each $x_i$ has an independent probability $p_i$ of being 1. The function $f$ contains $B+1$ integer intervals $0=\alpha_1<\alpha_2<\cdots<\alpha_{B+1}=n+1$ which are called the \emph{scoring intervals}. The value of function $f$ equals $j$ if the \emph{score} $\sum x_i$ satisfies $\alpha_j\leq \sum x_i \leq \alpha_{j+1}-1$. Testing continues until the value of the function becomes a foregone conclusion. Since the value of the function only depends on the number of 1's in its input, we say this function is symmetric.

The Stochastic Score Classification (SSClass) problem was generalized and studied by Gkenosis et al.~\cite{DBLP:conf/esa/GkenosisGHK18}, in which they gave both adaptive and non-adaptive algorithms to the weighted and unweighted versions of the problem.
Prior to that, the unit cost unweighted version was studied in the information theory literature~\cite{DBLP:conf/allerton/AcharyaJO11,DBLP:conf/isit/DasJOPS12} and the focus was to study the equivalence between verification and evaluation. Grammel et al.~\cite{DBLP:journals/algorithmica/GrammelHKL22} focused on the unit cost case of the unweighted SSClass problem and gave a 2-approximation non-adaptive algorithm. Gkenosis et al.~\cite{DBLP:journals/dam/GkenosisGHK22} also improved their results to give a $(B-1)$-approximation non-adaptive algorithm for the unweighted arbitrary cost SSClass problem.

When $B=2$, the unweighted version of the SSClass problem is equivalent to the \emph{Stochastic Boolean Function Evaluation} problem for $k$-of-$n$ functions where $f=1$ iff $\sum x_i\geq k$ and $0$ otherwise. The adaptive version of this problem has an elegant optimal algorithm that was discovered by Salloum, Breuer, and (independently) Ben-Dov~\cite{Salloum97,BenDov81}. For the non-adaptive version, Gkenosis et al. gave a 2-approximation algorithm using a simple round-robin strategy. Grammel et al. designed a 1.5-approximation algorithm for the unit cost case, which exactly matches the  adaptivity gap of the $k$-of-$n$ function. The Stochastic Boolean Function Evaluation (SBFE) problem, which was first studied by Deshpande et al., shares many common settings with the SSClass problem. The major difference is that, in the SBFE problem, the goal is to minimize the expected cost of evaluating a Boolean function rather than a Pseudo-Boolean function in the SSClass problem. Both the SBFE problem and the SSClass problem are NP-hard, since when $B=2$ the weighted SSClass problem corresponds to the SBFE problem for linear threshold functions, which is known to be NP-hard.

It was an open question whether there exists a polynomial-time constant-factor approximation algorithm for the general unweighted SSClass problem. Surprisingly, Ghuge et al.~\cite{DBLP:conf/ipco/GhugeGN22} studied the weighted version of the SSClass problem and provided the first polynomial-time constant-factor non-adaptive approximation algorithm. Their work closed the previous question, however, although Ghuge et al. did not try to minimize their approximation factor, it appears to be at least 100 in their setting. Hence, it would be interesting to figure out how to design non-adaptive approximation algorithms for the general unweighted SSClass problem with a reasonably small approximation factor.

In this work, we successfully provide two different 6-approximation non-adaptive algorithms. Our analysis of the two algorithms uses the properties of verification and evaluation. We first show that each of the two algorithms give a 6-approximation verification strategy for any $j\in\{1,\dots,B\}$, then since the strategy does not depend on any particular $j$, we conclude that both our algorithms are 6-approximation evaluation algorithms. Our results also indicate that the adaptivity gap for the unweighted SSClass problem is at most 6.

\section{Preliminaries}

For the SSClass problem, we can understand that the scoring intervals divide possible scores $\{0,\dots,n+1\}$ into multiple blocks. We say a realization $a\in\{0,1\}^n$ belong to a \emph{block} $j$ iff $f(a)=j$. Let $N_0(a), N_1(a)$ denote the number of 0's and 0's in realization $a$, respectively. From the definition we know that $f(a)=j$ iff $\alpha_j\leq N_1(a) \leq \alpha_{j+1}-1$. Since there are totally $n$ variables, the second inequality hints that $N_0(a)$ must satisfy $N_0(a)\geq n-(\alpha_{j+1}-1)$. Let $b_j=n-(\alpha_{j+1}-1)$, we can say assignment $a$ belongs to block $j$ if both $N_0(a)\geq b_j$ and $N_1(a)\geq a_j$ are satisfied, and we define $M_j$ such that
$$M_j:=\{a\mid a\in\{0,1\}^n, f(a)=j\}$$

Now we define the verification costs and evaluation costs of strategies. Let $C_j(S,a)$ denote the cost of running strategy $S$ to verify that realization $a$ belongs to block $j$. Then we define $E_j(S)$ to be the verification cost of block $j$ using strategy $S$:
$$E_j(S)=\sum_{a\in M_j}C_j(S,a)p(a)$$
If $S$ does not depend on any information of block $j$, we define $E(S)$ to be the evaluation cost of $f$ using strategy $S$:
$$E(S)=\sum_{j\in [B]}E_j(S)$$
Let $\mathcal{S}$ denote the set of all possible strategies. We define $\OPT_j$ to be an optimal verification strategy for block $j$:
$$E_j(\OPT_j)=\min_{S\in\mathcal{S}}E_j(S)$$
We also define $\OPT$ to be an optimal evaluation strategy for function $f$: 
$$E(\OPT)=\min_{S\in\mathcal{S}}E(S)$$

\paragraph{Adaptive strategies} A strategy is called an adaptive strategy if the choice of the next test to perform can depend on the results of previous tests. An adaptive strategy in our problem can be considered as a binary tree where each non-leaf node represents a variable to test and each arc of the node represent the 0/1 result of the variable. 

\paragraph{Non-adaptive strategies} A strategy is called a non-adaptive strategy if it cannot determine the next test to perform based on previous results; hence we may consider it as a permutation of variables. Testing continues until the value of $f$ is a foregone conclusion. If we represent the non-adaptive strategy as a binary tree, all nodes in the same level need to represent the same variable.

\paragraph{Adaptivity gap} The adaptivity gap captures the maximum difference between the optimal adaptive strategy and the optimal non-adaptive strategy. Intuitively, the adaptivity gap represents how ``bad'' the optimal non-adaptive strategy can be compared to the optimal adaptive strategy in the worst case; or how much benefit can be obtained if we are allowed to be adaptive in the (internal) decision-making process.

% \bigskip

% \noindent Define a realization $a\in\{0,1\}^n$ belong to block $i$ iff $N_1(a)\geq \alpha_i$ and $N_0(a)\geq\beta_i$, here $N_1(a), N_0(a)$ are the numbers of 1,0 in realization $a$. Let $M_i$ contains all possible realizations that belong to block $i$.

% Let $C_i(S,a)$ to note the cost of running strategy $S$ to verify that realization $a$ is in block $i$. Then we let $E_i(S)$ be the verification cost of block $i$ using strategy $S$,
% $$E_i(S)=\sum_{a\in M_i}C_i(S,a)p(a)$$
% and we let $E(S)$ be the evaluation cost on function $f$ using strategy $S$. Suppose $f$ have $B$ blocks,
% $$E(S)=\sum_{i\in [B]}E_i(S)$$
% Let $\mathcal{S}$ denote the set of all possible strategies. We define $\OPT_i$ as the optimal verification strategy for block $i$,
% $$\OPT_i=\argmin_{S\in\mathcal{S}}E_i(S)$$
% we also define $\OPT$ as the optimal evaluation strategy for function $f$,
% $$\OPT=\argmin_{S\in\mathcal{S}}E(S)$$

\section{Results}

\subsection{The 3-way modified round-robin algorithm}
Our first algorithm uses a modified round-robin approach which was previously introduced by \cite{allen2017evaluation}. The modified round-robin approach was designed to aggregate two sequences with heterogeneous costs. In our case, we extend it to aggregate three sequences with heterogeneous costs. 

Before we go to the 6-approximation proof of the evaluation strategy, we first propose a verification strategy $V_j$ for all assignments in any particular block $j$, using the following three sequences:\bigskip

\noindent\textbf{$P_0$}\ : all $x_i$ are sorted in the increasing order of $c_i/(1-p_i)$

\noindent\textbf{$P_1$}\ : all $x_i$ are sorted in the increasing order of $c_i/p_i$

\noindent\textbf{$P_c$}\ : all $x_i$ are sorted in the increasing order of $c_i$\bigskip

The verification strategy $V_j$ runs as follows

\begin{algorithm}
\caption{Verifying strategy $V_j$ for block $j$}\label{alg:V}
\begin{algorithmic}[1]
% \Require $n \geq 0$
% \Ensure $y = x^n$
\State $\pi \gets \text{first $\alpha_j+\beta_j$ variables in $P_c$}$
\State test all variables in $\pi$
% \State $\pi_2 \gets []$
\While{we have not find at least $\alpha_j$ 1's and $\beta_j$ 0's}
\If{we have found at least $\alpha_j$ 1's}
    \State $x \gets \text{the next variables in $P_0$ if it is not in $\pi$}$
\ElsIf{we have found at least $\beta_j$ 0's}
    \State $x \gets \text{the next variables in $P_1$ if it is not in $\pi$}$
\EndIf
\State test $x$
\EndWhile
\end{algorithmic}
\end{algorithm}

\begin{lemma}\label{lemma:2-approx-verification}
    Strategy $V_j$ is a 2-approximation verification strategy for verifying block $j$, which means
    $$E_j(V_j)\leq 2E_j(\OPT_j)$$
\end{lemma}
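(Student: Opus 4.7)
The plan is to compare $V_j$ against two auxiliary single-sided verification strategies: $V^1$, which tests variables in the order of $P_1$ until $\alpha_j$ ones are found, and $V^0$, which tests variables in the order of $P_0$ until $\beta_j$ zeros are found. Write $C_s(a) := C(V^s, a)$ for $s \in \{0,1\}$. The proof proceeds by separately lower-bounding $E_j(\OPT_j)$ via the single-sided strategies and upper-bounding $E_j(V_j)$ via a case analysis on the outcome of testing the initial batch $\pi$, then combining.

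For the lower bound, any verification strategy for block $j$ must, on each $a \in M_j$, locate at least $\alpha_j$ ones and at least $\beta_j$ zeros. Using the classical fact from the $k$-out-of-$n$ verification literature that the greedy $P_1$-order (resp.\ $P_0$-order) strategy is optimal for the simpler sub-task of locating $\alpha_j$ ones (resp.\ $\beta_j$ zeros) in expectation, even against adaptive strategies, one obtains $E_j(\OPT_j) \geq \sum_{a \in M_j} C_1(a)\,p(a)$ and $E_j(\OPT_j) \geq \sum_{a \in M_j} C_0(a)\,p(a)$. A third lower bound also holds: $E_j(\OPT_j) \geq \mathrm{cost}(\pi) \cdot \Pr[a \in M_j]$, since any verification must test at least $\alpha_j+\beta_j$ distinct variables on each $a \in M_j$, and $\pi$ realizes the cheapest such set.

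For the upper bound, by pigeonhole at least one of the two stopping conditions is satisfied immediately after $\pi$ is tested (since $|\pi| = \alpha_j+\beta_j$). In case~(b), where $\pi$ already contains $\geq \alpha_j$ ones, $V_j$ continues along $P_0$ (skipping $\pi$) until $\beta_j$ zeros have been observed in total; a counting argument---both $V_j$'s continuation and $V^0$ scan $P_0$ in the same order, and $V_j$'s $P_0$ stopping position is no later than $V^0$'s---shows that $V_j$'s continuation tests lie inside the test set of $V^0$ on $a$. Hence $C(V_j, a) \leq \mathrm{cost}(\pi) + C_0(a)$ pointwise in case~(b), and the symmetric case~(c) gives $C(V_j, a) \leq \mathrm{cost}(\pi) + C_1(a)$. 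Summing over $M_j$,
\[
E_j(V_j) \;\leq\; \mathrm{cost}(\pi) \cdot \Pr[M_j] \;+\; \sum_{a \in M_j^{\mathrm{b}}} C_0(a)\,p(a) \;+\; \sum_{a \in M_j^{\mathrm{c}}} C_1(a)\,p(a).
\]

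The main obstacle is sharpening this combination to the factor of exactly $2$. Applying the three lower bounds naively gives only a factor-$3$ bound: $\mathrm{cost}(\pi)\Pr[M_j] \leq E_j(\OPT_j)$ contributes one copy, and the continuation sums together satisfy $\sum_{M_j^{\mathrm{b}}} C_0 p + \sum_{M_j^{\mathrm{c}}} C_1 p \leq E_j(V^0) + E_j(V^1) \leq 2 E_j(\OPT_j)$, for a total of $3 E_j(\OPT_j)$. Recovering the claimed factor of $2$ requires a more amortized accounting that charges the cost of $\pi$ and a portion of the continuation cost jointly against a single copy of $\OPT_j$. I would attempt this by refining the pointwise bound in case~(b) so that the part of $\mathrm{cost}(\pi)$ paying for ``one-witnesses'' is offset against $C_1(a)$, exploiting that $\pi$ consists of the globally cheapest $\alpha_j+\beta_j$ variables and hence dominates in cost any subset of $V^0 \cup V^1$ of matching size.
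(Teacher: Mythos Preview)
Your decomposition is correct as far as it goes, and you have honestly identified the gap: it yields only a factor of~$3$. The sketch in your last paragraph---``offsetting one-witnesses in $\pi$ against $C_1(a)$''---is not a proof, and I do not see how to make it one. The obstruction is that $C_0(a)$ and $C_1(a)$ are costs of strategies on the \emph{full} variable set, so they re-pay for variables already in $\pi$; this double-counting cannot obviously be amortized away pointwise, and the global cheapness of $\pi$ does not give you a handle on which of $C_0$ or $C_1$ is relevant on a given realization.

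The paper closes the gap by a different, cleaner device. Instead of comparing the phase-2 cost of $V_j$ against $V^0$ or $V^1$, it introduces a modified instance with cost vector $c'$ in which every variable in $\pi$ has cost $0$ and all others keep their original cost. Two easy lemmas then finish the argument: (i) $V_j$ is \emph{optimal} for this modified instance, because some optimal strategy tests all free variables first, after which one is short of either $1$'s or $0$'s but not both, and the greedy $P_1$- or $P_0$-order is optimal for the residual one-sided problem on the remaining variables; and (ii) the modified optimum is at most the original optimum, since $c'\le c$ coordinatewise. Hence the phase-2 cost satisfies $E_j'(V_j)=E_j'(\OPT_j')\le E_j(\OPT_j)$ in a single step, and adding $\mathrm{cost}(\pi)\le E_j(\OPT_j)$ gives the factor $2$. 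The idea you are missing is that phase~2 of $V_j$ is not merely \emph{bounded} by some single-sided strategy on the full ground set---it is \emph{itself optimal} once the $\pi$-costs are treated as sunk.
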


To prove Lemma~\ref{lemma:2-approx-verification}, we need to use the following lemmas. First, our original verification problem in Lemma~\ref{lemma:2-approx-verification} contains a cost vector $c=(c_1,\dots,c_n)$. Consider a new problem with a different cost vector $c'=(c_1',\dots,c_n')$, and let set $A$ contains the indexes of the first $\alpha_j+\beta_j$ variables in $P_c$. Then we define $c'$ as
\begin{align*}
    &\forall i\in A, c_i'=0\\
    &\forall i\in [n]\setminus A, c_i'=c_i
\end{align*}
let $E_j'(S)$ be the cost of verifying block $j$ using strategy $S$ with the new cost vector $c'$, and let $\OPT_j'$ to be the optimal verification strategy for block $j$ with the new cost vector $c'$
\begin{lemma}\label{lemma:0-cost}
    $$E_j'({\OPT_j'})\leq E_j({\OPT_j})$$
\end{lemma}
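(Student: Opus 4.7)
The plan is to prove this by a simple monotonicity argument in the cost vector. For any fixed strategy $S$ (whether adaptive or non-adaptive) and any realization $a$, the verification cost $C_j(S,a)$ is a sum of test costs $c_i$ over the variables tested along the execution path of $S$ on input $a$. Since $c_i' \le c_i$ for every $i$ (with equality for $i \in [n]\setminus A$ and $c_i' = 0 \le c_i$ for $i \in A$), this sum pointwise cannot increase when we switch from $c$ to $c'$. Taking the weighted sum over $a \in M_j$ with weights $p(a)$, we get the strategy-wise inequality
$$E_j'(S) \le E_j(S) \qquad \text{for every } S \in \mathcal{S}.$$

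Next I would apply this in particular to $S = \OPT_j$, obtaining $E_j'(\OPT_j) \le E_j(\OPT_j)$. Then, since $\OPT_j'$ minimizes $E_j'(\cdot)$ over $\mathcal{S}$ and $\OPT_j$ is a valid element of $\mathcal{S}$ (the definition of a strategy as a decision tree / permutation does not involve the cost vector), we have $E_j'(\OPT_j') \le E_j'(\OPT_j)$. Chaining the two inequalities gives the claim.

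There is no substantive obstacle in this argument; the lemma is really just the observation that the minimum-cost verification strategy is monotone non-decreasing as a function of the coordinates of the cost vector. The only point worth articulating carefully is that the space $\mathcal{S}$ of strategies is the same in the two cost regimes, so that any strategy optimal for $c$ remains a feasible (though not necessarily optimal) candidate under $c'$; beyond that the proof is one line per inequality.
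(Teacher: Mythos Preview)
Your argument is correct and is essentially identical to the paper's own proof: apply the pointwise inequality $c_i'\le c_i$ to the fixed strategy $\OPT_j$ to get $E_j'(\OPT_j)\le E_j(\OPT_j)$, then use optimality of $\OPT_j'$ under $c'$ to conclude $E_j'(\OPT_j')\le E_j'(\OPT_j)$. The only addition is that you state the strategy-wise monotonicity for all $S$ before specializing, which is a harmless elaboration.
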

\begin{proof}
    Consider using $\OPT_j$ in our new problem with cost vector $c'$. Obviously
    $$E_j'(\OPT_j)\leq E_j(\OPT_j)$$
    because $\forall i\in[n], c_i'\leq c_i$. Since $\OPT_j'$ is the optimal verification strategy for block $j$ with $c'$,
    $$E_j'(\OPT_j')\leq E_j'(\OPT_j)$$
\end{proof}

\begin{lemma}\label{lemma:testing-0-cost}
    There exists an optimal verification strategy for the new problem which first tests all $x_i$ where $c_i'=0$.
\end{lemma}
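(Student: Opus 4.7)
The plan is an exchange argument: I take any optimal verification strategy $\OPT_j'$ for the modified cost vector $c'$ and build from it another optimal strategy $S'$ that tests every $x_i$ with $i\in A$ before anything else. Because variables in $A$ have cost $0$ under $c'$, inserting these tests at the front is ``free'', so the only thing to verify is that doing so does not force any extra paid tests later.

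The strategy $S'$ proceeds in two phases. Phase 1 tests all $x_i$ for $i\in A$, in any fixed order, at total cost $0$; let $\sigma_A$ be the resulting partial assignment. Phase 2 simulates $\OPT_j'$: $S'$ walks down $\OPT_j'$'s decision tree, and at each node querying some $x_i$ it either reads off the value of $x_i$ from $\sigma_A$ (if $i\in A$, without paying anything) or actually performs the test and pays $c_i'=c_i$ (if $i\notin A$). $S'$ halts as soon as the collected evidence already forces membership in $M_j$, which is a legitimate stopping rule for a verification strategy.

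To see that $E_j'(S')\le E_j'(\OPT_j')$, fix any realization $a\in M_j$. Because $S'$ follows the same branches as $\OPT_j'$ on $a$ (at each branching node both strategies observe the same bit) until $S'$ possibly short-circuits earlier, the set of indices outside $A$ for which $S'$ actually pays is contained in the corresponding set for $\OPT_j'$; indices inside $A$ contribute $0$ to the $c'$-cost of either strategy. Hence $C_j(S',a)\le C_j(\OPT_j',a)$ pointwise on $M_j$, and taking expectations gives $E_j'(S')\le E_j'(\OPT_j')$. Therefore $S'$ is also optimal and has the desired front-loading property, as claimed. The only step worth being careful about is ensuring Phase 2's short-circuit counts as valid termination, which holds directly from the definition of a verification strategy, so no real technical obstacle arises.
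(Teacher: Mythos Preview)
Your proposal is correct and takes essentially the same approach as the paper: both are exchange arguments showing that zero-cost tests can be pushed to the front without increasing cost. The paper phrases it as iteratively moving each free variable to the root of the decision tree, while you phrase it as first testing all of $A$ and then simulating $\OPT_j'$; these are two standard ways of packaging the same idea, and your simulation formulation is arguably a bit cleaner.
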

\begin{proof}
    We first note that each strategy corresponds to a decision tree. Suppose there exists an optimal strategy that did not test a ``free'' variable $x'$ before testing any variables with costs. Then for any root-to-leaf path that contains $x'$, moving $x'$ to the root of the decision tree will not change the cost on that path; for any root-to-leaf path does not contain $x'$, first testing $x'$ will never increase the cost on that path as well. We can reuse this argument until all free variables are tested before testing any costly variables.
\end{proof}

\begin{lemma}\label{lemma:vi-optimal}
    $V_j$ is an optimal strategy for the new problem with cost vector $c'$.
\end{lemma}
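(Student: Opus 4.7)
The plan is to use Lemma~\ref{lemma:testing-0-cost} to peel off the zero-cost tests on $A$ and then argue optimality of the residual order via a greedy exchange. By Lemma~\ref{lemma:testing-0-cost}, there is an optimal strategy for the new cost vector $c'$ that first tests every variable in $A$ at zero cost; $V_j$ also starts this way, so only the continuation on $R := [n] \setminus A$ needs to be compared. After $A$ has been tested, let $k_1$ be the observed count of $1$'s in $A$. Since $|A| = \alpha_j + \beta_j$, exactly one of the two block-$j$ thresholds remains uncertified: if $k_1 \ge \alpha_j$ we must find $k_1 - \alpha_j$ more zeros in $R$, and if $k_1 < \alpha_j$ we must find $\alpha_j - k_1$ more ones in $R$. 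Strategy $V_j$ handles these by testing $R$ in the $P_0$-order and the $P_1$-order respectively, stopping when the deficit is filled.

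The proof thus reduces to a greedy claim: for a set of independent variables $R$ and a target $t \ge 1$, the optimal adaptive strategy for certifying ``at least $t$ zeros in $R$'' (summed over realizations with at least $t$ zeros, weighted by the product measure) is a fixed permutation of $R$ in increasing order of $c_i/(1-p_i)$; symmetrically for ones using $c_i/p_i$. Since $P_0$ and $P_1$ restricted to $R$ are precisely those orders, this claim implies that $V_j$ is optimal on each residual sub-case, and hence overall.

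I would prove the greedy claim by a standard adjacent-swap exchange argument. First, because the optimal next test depends only on the set of untested variables and the current zeros-deficit (not on the precise history of observations), one may assume the optimal strategy is a fixed permutation $\pi$ of $R$ executed until the target is hit. Let $(u,v)$ sit at consecutive positions $\ell,\ell+1$ with $c_u/(1-p_u) > c_v/(1-p_v)$, and let $\pi'$ denote the permutation obtained by swapping them. Condition on the prefix state $z$, the number of zeros observed in positions $1,\ldots,\ell-1$: if $z \ge t$ the strategy has already stopped, and if $z \le t-2$ both $u$ and $v$ are tested in both $\pi$ and $\pi'$ with combined cost $c_u + c_v$ and the continuation enters an identical state, so the swap contributes nothing. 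The only nontrivial case is $z = t-1$, where a direct case analysis over $(u,v) \in \{0,1\}^2$ gives a total expected-cost difference of $\Pr[z = t-1] \cdot (c_u(1-p_v) - c_v(1-p_u))$, which is non-negative exactly when the pair is out of greedy order. The main technical obstacle is checking that the continuation cost past position $\ell + 1$ really does agree in $\pi$ and $\pi'$ in every non-critical prefix state; this is immediate once one notes that after both variables have been tested the same variables have been probed with the same outcomes, so the remaining strategy sees the identical state and incurs the identical cost. Iterating the swap produces the $P_0$-order (respectively $P_1$-order) without ever increasing cost, and combined with the reduction to the residual problem this proves Lemma~\ref{lemma:vi-optimal}.
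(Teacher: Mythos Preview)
Your approach is essentially the same as the paper's: invoke Lemma~\ref{lemma:testing-0-cost} to test $A$ first for free, observe that after $|A|=\alpha_j+\beta_j$ tests at most one threshold is still outstanding, and then appeal to the greedy order for the residual ``find $t$ ones'' (resp.\ zeros) problem. The paper simply cites the greedy $c_i/p_i$ order as the known optimal strategy for verifying at least $k$ ones in a $k$-of-$n$ function, whereas you supply an adjacent-swap proof of this fact; your computation of the swap difference $\Pr[z=t-1]\,(c_u(1-p_v)-c_v(1-p_u))$ is correct.

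One small soft spot: your reduction from adaptive strategies to a fixed permutation is stated too quickly. You argue that the optimal next test depends only on the untested set and the current deficit, which is true, but that still allows the choice to depend on the deficit, so Markovianity alone does not yield a permutation. The clean fix is an induction on $|R|$: assuming the greedy permutation is optimal on every smaller residual instance (for every deficit), any optimal adaptive strategy tests some first variable and then, by induction, follows the greedy order on the rest regardless of the outcome; hence it is a permutation, and your exchange argument then pins down the first variable. This is standard and does not affect correctness, but it is worth saying explicitly.
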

\begin{proof}
    We can divide Algorithm~\ref{alg:V} into two phases. In the first phase, we test all 0-cost variables. If we have not yet verified a realization is in block $j$ after testing those variables, we are either short of 1's or short of 0's. For instance, suppose we are short of two 1's after phase 1; then we need to decide the optimal verification strategy for the remaining variables. Let $\pi$ contain the variables we tested in phase 1. We define the sub-problem as verifying that there exist at least two 1's in $\{x_1,\dots,x_n\}\setminus \pi$, given there are at least two 1's in $\{x_1,\dots,x_n\}\setminus \pi$. The second phase of Algorithm~\ref{alg:V} is optimal since we know that testing the variables in the increasing order of $c_i/p_i$ is the optimal strategy for verifying at least $k$ 1's in a $k$-of-$n$ function. Combining this argument with our previous Lemma~\ref{lemma:testing-0-cost} finishes the proof.
\end{proof}

\begin{proof}[Proof of Lemma~\ref{lemma:2-approx-verification}]
We can compare our Algorithm~\ref{alg:V} on the verification problem with $c$ to the verification problem with $c'$. The only difference is the costs incurred in phase 1. Hence
$$E_i(V_j)=E_j'(V_j)+\sum_{i:x_i\in\pi}c_i$$
Since we need to run at least $\alpha_j+\beta_j$ tests to verify any realization $a$ is in block $j$,
$$\sum_{i:x_i\in\pi}c_i\leq E_j(\OPT_j)$$
and from Lemma~\ref{lemma:vi-optimal} and Lemma~\ref{lemma:0-cost} we know $E_j'(V_j)=E_j'(\OPT_j')\leq E_j(\OPT_j)$,
$$E_j(V_j)=E_j'(V_j)+\sum_{i:x_i\in\pi}c_i\leq 2E_j(\OPT_j)$$
\end{proof}

We now present our (modified) round-robin evaluation algorithm for any symmetric function $f$.

\begin{algorithm}[H]
\caption{Round-robin strategy $RR$}\label{alg:RR}
\begin{algorithmic}[1]
% \Require $n \geq 0$
% \Ensure $y = x^n$
\State $K_0\gets 0, K_1\gets 0, K_c\gets 0$
\State $\pi \gets []$
\While{we have not find at least $\alpha_j$ 1's and $\beta_j$ 0's for any $j\in\{1,\dots,B\}$}
\State $x_0\gets \text{the next variable in $P_0$ not in $\pi$}$
\State $x_1\gets \text{the next variable in $P_1$ not in $\pi$}$
\State $x_c\gets \text{the next variable in $P_c$ not in $\pi$}$
\If{$K_0+c_0=\min\{K_0+c_0,K_1+c_1,K_c+c_c\}$}
    \State test $x_0$
    \State $\pi \gets \pi+x_0$
    \State $K_0 = K_0 + c_0$
\ElsIf{$K_1+c_1=\min\{K_0+c_0,K_1+c_1,K_c+c_c\}$}
    \State test $x_1$
    \State $\pi \gets \pi+x_1$
    \State $K_1 = K_1 + c_1$
\ElsIf{$K_c+c_c=\min\{K_0+c_0,K_1+c_1,K_c+c_c\}$}
    \State test $x_c$
    \State $\pi \gets \pi+x_c$
    \State $K_c = K_c + c_c$
\EndIf
\EndWhile\\
\Return $f(x_1,\dots,x_n)=j$
\end{algorithmic}
\end{algorithm}

\begin{lemma}\label{lemma:6-approx-verification}
    $RR$ is a 6-approximation verification strategy for verifying block $j$:
    $$E_j(RR)\leq 6E_j(\OPT_j)$$
\end{lemma}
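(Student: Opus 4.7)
The plan is to chain two bounds. By Lemma~\ref{lemma:2-approx-verification} we already have $E_j(V_j)\le 2E_j(\OPT_j)$, so it suffices to establish $E_j(RR)\le 3 E_j(V_j)$. I would aim for the pointwise inequality $C_j(RR,a)\le 3\,C_j(V_j,a)$ for every $a\in M_j$.

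Fix $a\in M_j$ and introduce three ``pure'' non-adaptive strategies $\Pi^0,\Pi^1,\Pi^c$, where $\Pi^s$ tests $P_s$ in order and stops the moment $f$ is forced; let $\Pi^s(a)$ be its cost and $\pi_s^*$ the number of tests it makes. The key structural step is \emph{pointer monotonicity}: at the moment $RR$ terminates on $a$, the ``next-to-test'' position $\pi_s$ that $RR$ maintains in sequence $P_s$ satisfies $\pi_s-1\le \pi_s^*$. The reason is that each tested variable is removed from all three sequences' candidate lists simultaneously, so once $RR$ has tested the first $\pi_s^*$ variables of $P_s$ (from whichever sequence), the block is already verified by the definition of $\pi_s^*$, and $RR$ stops no later. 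Hence the running total that $RR$ attributes to sequence $s$ satisfies $K_s\le \Pi^s(a)$, and summing over $s$ gives
\[
C_j(RR,a) \;=\; K_0+K_1+K_c \;\le\; \Pi^0(a)+\Pi^1(a)+\Pi^c(a).
\]

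The main obstacle is to convert this into a $3\,C_j(V_j,a)$ bound. I would first try the pointwise inequalities $\Pi^s(a)\le C_j(V_j,a)$ for each $s\in\{0,1,c\}$. The case $s=c$ looks most tractable: $\Pi^c$ and $V_j$ share the same phase-1 prefix of $P_c$, so only $\Pi^c$'s continuation along $P_c$ needs to be compared with $V_j$'s adaptive detour into $P_0$ or $P_1$, which a monotone-cost or exchange argument should handle. The cases $s\in\{0,1\}$ are subtler because $\Pi^s$ and $V_j$ no longer share a common phase 1, so I would invoke Lemma~\ref{lemma:vi-optimal} with the free-phase-1 cost vector $c'$ from Lemma~\ref{lemma:0-cost} and compare $\Pi^s$ against $V_j$ after zeroing out the cheapest $\alpha_j+\beta_j$ tests. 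If a clean pointwise program resists, the fallback is an expectation-level argument that mimics the proof of Lemma~\ref{lemma:2-approx-verification} separately for each $\Pi^s$, obtaining $E_j(\Pi^s)\le 2E_j(\OPT_j)$ for $s\in\{0,1,c\}$ and summing to give $E_j(RR)\le \sum_s E_j(\Pi^s) \le 6E_j(\OPT_j)$ directly.
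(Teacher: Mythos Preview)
Your overall plan—chain Lemma~\ref{lemma:2-approx-verification} with a pointwise bound $C_j(RR,a)\le 3\,C_j(V_j,a)$—is exactly the paper's plan, and your pointer-monotonicity step is fine: one can indeed show $K_s\le \Pi^s(a)$ for each $s\in\{0,1,c\}$, hence $C_j(RR,a)\le \Pi^0(a)+\Pi^1(a)+\Pi^c(a)$. The gap is entirely in your step~3. Neither the pointwise bound $\Pi^s(a)\le C_j(V_j,a)$ nor the fallback $E_j(\Pi^s)\le 2E_j(\OPT_j)$ is true. Take $\alpha_j=1$, $\beta_j=0$, unit costs, $p_1=1-\epsilon$ and $p_2=\cdots=p_n=\epsilon$; then $P_0=(x_2,\ldots,x_n,x_1)$, and on the realization $a=(1,0,\ldots,0)$, which has probability $\approx 1$ inside $M_j$, the pure strategy $\Pi^0$ pays $n$ while both $V_j$ and $\OPT_j$ pay $1$. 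Thus $E_j(\Pi^0)/E_j(\OPT_j)\approx n$. A symmetric example kills $\Pi^1$, and putting the one ``likely-zero'' variable last in $P_c$ (while keeping it second in $P_0$) kills $\Pi^c$ in the same way. So the sum $\sum_s\Pi^s(a)$ is loose by an unbounded factor, and Lemmas~\ref{lemma:0-cost}--\ref{lemma:vi-optimal} cannot rescue it: those lemmas only say $V_j$'s phase~2 is optimal for the residual, not that an arbitrary fixed order is.

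What the paper uses instead is the \emph{balance} property of the modified round robin, which your additive decomposition discards. Just before $RR$ performs its last test, one has $C_j(RR,a)\le 3\bigl(K_t+c_{x_{P_t}}\bigr)$ for \emph{every} $t\in\{0,1,c\}$, where $x_{P_t}$ is the next untested variable in $P_t$. This is a ``$3\cdot\min_t$'' bound rather than a ``$\sum_t$'' bound. The proof then chooses $t$ \emph{per realization}: it splits into three cases according to whether $V_j$ on $a$ finishes in phase~1, or enters phase~2 seeking more $0$'s, or seeking more $1$'s, and in each case argues that for the appropriate $t$ (respectively $c$, $0$, or $1$) the variable $x_{P_t}$ still lies inside the set that $V_j$ tests, whence $K_t+c_{x_{P_t}}\le C_j(V_j,a)$. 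The missing idea in your attempt is precisely this realization-dependent choice of the comparison sequence.
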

\begin{proof}
    Recall we separate Algorithm~\ref{alg:V} into two phases: in phase 1 we test the cheapest $\alpha_j+\beta_j$ variables, and in phase 2 we test the remaining variables. Consider an arbitrary realization $a$, there can be three cases: 1) the algorithm terminates after phase 1; 2) the algorithm has not found enough 0's in phase 1; 3) the algorithm has not found enough 1's in phase 1. We let $\pi$ denote the variables tested in phase 1 and $\pi'$ be the variables tested in phase 2 when we run strategy $V_j$ on realization $a$.
    
    \textbf{Case 1}: Algorithm~\ref{alg:V} found both $\alpha_j$ 1's and $\beta_j$ 0's in phase 1.
    
   Let $\Bar{x}$ be the last variables tested by Algorithm~\ref{alg:RR} and it has cost $\Bar{c}$. Let $\sigma_0,\sigma_1,\sigma_c$ denote the tested variables on sequences $P_0,P_1,P_c$ right before testing $\Bar{x}$. Let  the next variable on $P_0$ be $x_{P_0}$, we define $\sigma_0^+=\sigma_0\cup\{x_{P_0}\}$. $x_{P_1},x_{P_c},\sigma_1^+,\sigma_c^+$ are defined similarly. Clearly, $\sigma_c\subset\pi$, because otherwise all variables in $\pi$ have been tested and we must have found enough 1's and 0's. 
   Also, it must be that $x_{P_c}\in\pi$. If not, suppose $x_{P_c}\notin\pi$; then all variables in $P_c$ and ahead of $x_{P_c}$ must have been tested (which includes all variables in $\pi$), and we know it is not true since the algorithm has not yet terminated. Thus we conclude $\sigma_c^+\subseteq \pi$.
    By the round-robin principle, \begin{align*}
        C_i(RR,a)&=\sum_{j:x_j\in\sigma_0}c_j+\sum_{j:x_j\in\sigma_1}c_j+\sum_{j:x_j\in\sigma_c}c_j+\Bar{c}\\
        &\leq 3\sum_{j:x_j\in \sigma_c^+}c_j\leq 3\sum_{j:x_j\in \pi}c_j=3C_i(V_i,a)
    \end{align*}
    
    \textbf{Case 2}: Algorithm~\ref{alg:V} found at least $\alpha_j$ 1's but fewer than $\beta_j$ 0's in phase 1.
    
    We separate Algorithm~\ref{alg:RR} into two phases as well. It starts in phase 1 and moves to phase 2 after it has found just $\alpha_j$ 1's. Let $C_j(RR^1,a),C_j(RR^2,a)$ denote the cost spent in phase 1 and phase 2 of Algorithm~\ref{alg:RR} respectively, and similarly, let $C_j(V_j^1,a),C_j(V_j^2,a)$ denote the cost spent in phase 1 and phase 2 of Algorithm~\ref{alg:V} respectively.

    If by the end of phase 1 Algorithm~\ref{alg:RR} has collected at least $\alpha_j$ 1's and at least $\beta_j$ 0's, we can reuse the previous argument to get
    $$C_j(RR,a)=C_j(RR^1,a)\leq 3C_j(V_j^1,a)\leq 3C_j(V_j,a)$$
    We now focus on the case that Algorithm~\ref{alg:RR} at least test one variable in phase 2. Since Algorithm~\ref{alg:RR} has found $\alpha_j$ 1's in phase 1, when it terminates in phase 2, it must have just found $\beta_j$ 0's. We know at least one of the two conditions $x_{P_c}\in\pi,x_{P_0}\in\pi'$ hold, otherwise, all variables in $\pi\cup\pi'$ must have been tested and the algorithm should have terminated before testing the last variable. If $x_{P_c}\in \pi$ holds, then $\sigma_c^+\subseteq \pi$. From the round-robin principle, we know
    $$C_j(RR,a)\leq 3\sum_{i:x_i\in\sigma_c^+}c_i\leq 3\sum_{i:x_i\in\pi}c_i= 3C_j(V_j^1,a)\leq 3C_j(V_j,a)$$
    If $x_{P_c}\in \pi$ does not hold, $x_{P_0}\in\pi'$ must hold, which gives us $\sigma_0^+\subseteq \pi'$. Then,
    $$C_j(RR,a)\leq 3\sum_{i:x_i\in\sigma_0^+}c_i\leq 3\sum_{i:x_i\in\pi'}c_i= 3C_j(V_j^2,a)\leq 3C_j(V_j,a)$$
    
    \textbf{Case 3}: Algorithm~\ref{alg:V} found at least $\beta_j$ 0's but less than $\alpha_j$ 1's in phase 1.
    
    Using a symmetric argument from case 2, $C_j(RR,a)\leq 3C_j(V_j,a)$ holds for case 3 as well.\bigskip
    
    Since $C_j(RR,a)\leq 3C_j(V_j,a)$ holds in all three possible cases, 
    \begin{align*}
        E_j(RR)=\sum_{a\in M_j}C_j(RR,a)p(a)\leq 3\sum_{a\in M_j}C_j(V_j,a)p(a)=3E_j(V_j)\leq 6E_j(\OPT_j)
    \end{align*}
\end{proof}
\begin{theorem}\label{theorem:1}
    $RR$ is a 6-approximation evaluation strategy.
\end{theorem}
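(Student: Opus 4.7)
The plan is to bootstrap Lemma~\ref{lemma:6-approx-verification} into a guarantee on the evaluation cost by exploiting the fact that strategy $RR$ (Algorithm~\ref{alg:RR}) is block-agnostic: it performs the three-way round-robin until the value of $f$ is forced, and nowhere does its description reference a specific $j$. This is precisely the condition under which the preliminaries allow us to write $E(RR)=\sum_{j\in[B]} E_j(RR)$. Applying Lemma~\ref{lemma:6-approx-verification} termwise then gives
\[
E(RR)=\sum_{j\in[B]} E_j(RR)\;\leq\;6\sum_{j\in[B]} E_j(\OPT_j).
\]

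The second step is the standard ``verification is at most evaluation'' bound $\sum_{j\in[B]} E_j(\OPT_j)\leq E(\OPT)$. For each individual $j$, the strategy $\OPT$ is itself a (block-agnostic) strategy, so restricting it to realizations $a\in M_j$ makes it a legitimate verification strategy for block $j$, incurring cost $E_j(\OPT)$ on those realizations. Optimality of $\OPT_j$ therefore gives $E_j(\OPT_j)\leq E_j(\OPT)$ for each $j$, and summing over $j$ together with $E(\OPT)=\sum_{j\in[B]} E_j(\OPT)$ yields the bound. Chaining it with the previous display produces $E(RR)\leq 6\,E(\OPT)$, which is the statement of the theorem.

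I do not anticipate any real obstacle: once Lemma~\ref{lemma:6-approx-verification} is available, the argument is essentially a two-line aggregation. The only point worth being careful about is spelling out explicitly that $\OPT$ qualifies as a verification strategy for every block---a small but conceptually important observation, since it is exactly the ``verification and evaluation are closely related'' principle the introduction highlights and the reason the proof of the verification bound immediately upgrades to an evaluation bound.
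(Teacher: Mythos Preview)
Your proposal is correct and mirrors the paper's own proof essentially line for line: decompose $E(RR)$ over blocks, apply Lemma~\ref{lemma:6-approx-verification} termwise, then use $E_j(\OPT_j)\leq E_j(\OPT)$ and sum back to $E(\OPT)$. The only difference is that you spell out the justification for $E_j(\OPT_j)\leq E_j(\OPT)$ more explicitly than the paper, which simply asserts it.
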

\begin{proof}
    \begin{align*}
        E(RR)=\sum_{j\in [B]}E_j(RR)\leq 6\sum_{j\in [B]}E_j(\OPT_j)\leq 6\sum_{j\in [B]}E_j(\OPT) \leq 6E(\OPT)
    \end{align*}
Here the second inequality happens because the optimal verification cost is always upper-bounded by the optimal evaluation cost.
\end{proof}

\subsection{The 2-way modified round-robin algorithm}
Now we introduce the second algorithm, which uses the modified round-robin approach on two sequences. This algorithm is first proposed by Gkenosis et al.~\cite{DBLP:conf/esa/GkenosisGHK18} and they showed that the algorithm yields a $2(B-1)$-approximation. The algorithm is as follows
\begin{algorithm}[H]
\caption{Round-robin strategy $RR2$}\label{alg:RR2}
\begin{algorithmic}[1]
% \Require $n \geq 0$
% \Ensure $y = x^n$
\State $K_0\gets 0, K_1\gets 0$
\State $\pi \gets []$
\While{we have not find at least $\alpha_j$ 1's and $\beta_j$ 0's for any $j\in\{1,\dots,B\}$}
\State $x_0\gets \text{the next variable in $P_0$ not in $\pi$}$
\State $x_1\gets \text{the next variable in $P_1$ not in $\pi$}$
\If{$K_0+c_0\leq K_1+c_1$}
    \State test $x_0$
    \State $\pi \gets \pi+x_0$
    \State $K_0 = K_0 + c_0$
\Else
    \State test $x_1$
    \State $\pi \gets \pi+x_1$
    \State $K_1 = K_1 + c_1$
\EndIf
\EndWhile\\
\Return $f(x_1,\dots,x_n)=j$
\end{algorithmic}
\end{algorithm}
We propose a new proof which gives a 6-approximation. Our proof reuses many important properties in the proof of our previous algorithm. In addition, we need to use two extra lemmas.

Let $\Pi$ denote the first $k$ variables in $P_c$. Let $\Pi'$ be an empty set. We compare the first variables in both $P_0$ and $P_1$, then add the variable with a smaller cost to $\Pi'$ and remove that variable from both $P_0$ and $P_1$.\footnote{If the two variables have the same costs, we break ties arbitrarily.} We repeat this procedure for $k$ times so $\Pi'$ has $k$ variables. At last, we let $\Pi''$ denote the first $k$ variables output by the modified round-robin strategy $RR2$ if we ignore the stopping condition.
\begin{lemma}\label{lemma:6}
    $$\sum_{i:x_i\in\Pi'}c_i\leq 2\sum_{i:x_i\in\Pi}c_i$$
\end{lemma}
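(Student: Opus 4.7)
My plan is to establish the per-step bound $c_{u_t} \le 2\,c_{v_t}$ for each $t \in \{1, \ldots, k\}$, where $u_1, \ldots, u_k$ is the sequence of variables added to $\Pi'$ in order, and $v_1, v_2, \ldots, v_n$ lists the variables in nondecreasing order of cost (so $\Pi = \{v_1, \ldots, v_k\}$). Summing the per-step bound over $t$ immediately yields $\sum_{i:\,x_i \in \Pi'} c_i \le 2 \sum_{i:\,x_i \in \Pi} c_i$, which is the claim.

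Fix a step $t$ and let $S_t$ denote the set of variables still present in $P_0$ and $P_1$ (those not yet added to $\Pi'$). Exactly $t-1 < t$ variables have been removed, so by pigeonhole at least one element of $\{v_1, \ldots, v_t\}$ still lies in $S_t$; pick any such survivor $v^*$, noting $c_{v^*} \le c_{v_t}$. Let $a$ and $b$ be the current heads of $P_0$ and $P_1$ respectively. The process adds $u_t$, the cheaper of $a$ and $b$, so it suffices to prove $\min(c_a, c_b) \le 2\,c_{v^*}$.

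If $c_a \le c_{v^*}$ or $c_b \le c_{v^*}$ the inequality is immediate; otherwise put $r := c_a/c_{v^*} > 1$ and $s := c_b/c_{v^*} > 1$. Because $a$ is the earliest surviving variable in $P_0$, it precedes $v^*$ in $P_0$, giving $c_a/(1-p_a) \le c_{v^*}/(1-p_{v^*})$, which rearranges to $p_{v^*} \ge 1 - (1-p_a)/r \ge 1 - 1/r$ (using $p_a \ge 0$). Symmetrically, $b$ precedes $v^*$ in $P_1$, giving $c_b/p_b \le c_{v^*}/p_{v^*}$, which rearranges to $p_{v^*} \le p_b/s \le 1/s$ (using $p_b \le 1$). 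Chaining these, $1 - 1/r \le p_{v^*} \le 1/s$, hence $1/r + 1/s \ge 1$. If both $r > 2$ and $s > 2$ held, then $1/r + 1/s < 1$, a contradiction; therefore $\min(r,s) \le 2$, i.e., $\min(c_a, c_b) \le 2\,c_{v^*}$.

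Combining, $c_{u_t} \le 2\,c_{v^*} \le 2\,c_{v_t}$ at every step $t$, and summing over $t = 1, \ldots, k$ finishes the lemma. The main point requiring care is identifying the right witness $v^*$ at each step; once the pigeonhole observation fixes a surviving element of $\{v_1, \ldots, v_t\}$, the two sorting criteria applied to the head-versus-$v^*$ comparisons produce the complementary bounds $p_{v^*} \ge 1 - 1/r$ and $p_{v^*} \le 1/s$, whose consistency forces $1/r + 1/s \ge 1$ and thus rules out $r, s > 2$ simultaneously.
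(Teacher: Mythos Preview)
Your proof is correct and follows essentially the same approach as the paper's: both establish the per-step bound $c_{u_t}\le 2c_{v_t}$ by comparing the current heads of $P_0$ and $P_1$ against a cheap surviving variable using the two sorting criteria, and then sum over $t$. Your explicit pigeonhole step for the witness $v^*$ and the $1/r+1/s\ge 1$ algebra are minor cosmetic variations on the paper's argument, which takes $v^*$ to be the cheapest remaining variable and bounds $\min\{c_0,c_1\}\le(1-p_0+p_1)\,c^*\le 2c^*$ directly.
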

\begin{proof}
    Let $x_0, x_1$ denote the first variable in the remaining sequence $P_0, P_1$, and let their costs be $c_0, c_1$, respectively. Define $x^*$ to be the variable that has the smallest testing cost among all remaining variables and let its cost be $c^*$. We first want to prove
    $$\min\{c_0,c_1\}\leq 2c^*$$
    Since $x_0$ is the first variable in $P_0$, we have
    $$\frac{c_0}{1-p_0}\leq \frac{c^*}{1-p^*} \Rightarrow \frac{1-p_0}{c_0} \geq \frac{1-p^*}{c^*}$$
    and since $x_1$ is the first variable in $P_1$, we have
    $$\frac{c_1}{p_1}\leq \frac{c^*}{p^*} \Rightarrow \frac{p_1}{c_1} \geq \frac{p^*}{c^*}$$
    Hence,
    \begin{align*}
        \frac{1-p_0}{c_0}+\frac{p_1}{c_1} \geq \frac{1}{c^*}&\Rightarrow
        \frac{1-p_0+p_1}{\min\{c_0,c_1\}} \geq \frac{1}{c^*}\\
        &\Rightarrow
        \min\{c_0,c_1\} \leq (1-p_0+p_1)\cdot c^*\leq 2c^*
    \end{align*}
    Since we can consider that all variables in $\Pi'$ are inserted sequentially, then each time it needs to compare the two variables $x_0, x_1$ with the variable which has the smallest cost among the remaining variables. Hence, we can see for all $i\in[k], \Pi_i'\leq 2\Pi_i$, the lemma follows.
\end{proof}
\begin{lemma}\label{lemma:7}
    $$\sum_{i:x_i\in\Pi''}c_i\leq 2\sum_{i:x_i\in\Pi'}c_i$$
\end{lemma}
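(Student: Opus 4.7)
The plan is to parameterize both $\Pi''$ and $\Pi'$ by how many of the $k$ variables each draws from $P_0$ and $P_1$: write $(p'',q'')$ and $(p',q')$ with $p''+q''=p'+q'=k$. Introducing cumulative sums $A_p$ for the first $p$ costs in $P_0$ and $B_q$ for the first $q$ costs in $P_1$, the two totals are $A_{p''}+B_{q''}$ and $A_{p'}+B_{q'}$. If the two decompositions coincide, the lemma holds with equality; by symmetry between $P_0$ and $P_1$ I may otherwise assume without loss of generality that $p''>p'$, which forces $q''<q'$.

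The engine of the proof is the $RR2$ selection rule, which at state $(p,q)$ picks from $P_0$ iff $A_{p+1}\leq B_{q+1}$; thus every $P_0$ pick that reaches $(p+1,q)$ certifies $A_{p+1}\leq B_{q+1}$. I would case on the sequence of the very last pick made by $RR2$. If it is from $P_0$, applied at state $(p''-1,q'')$, we get $A_{p''}\leq B_{q''+1}$, and since $q''+1\leq q'$ this gives $A_{p''}\leq B_{q'}$. If the last pick is from $P_1$, then because $p''\geq 1$ I can locate the last earlier $P_0$ pick, say at state $(p''-1,q_0)$ with $q_0<q''$; the rule at that step gives $A_{p''}\leq B_{q_0+1}\leq B_{q''}\leq B_{q'}$. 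In both cases $A_{p''}\leq B_{q'}$, and $B_{q''}\leq B_{q'}$ is free, so $A_{p''}+B_{q''}\leq 2B_{q'}\leq 2(A_{p'}+B_{q'})$, which is exactly what the lemma asks for.

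The main obstacle I expect is resisting the temptation to argue position-by-position, as Lemma~\ref{lemma:6} does: $\Pi''$ and $\Pi'$ can interleave their $P_0$ and $P_1$ picks very differently, so a pick-by-pick charging scheme does not go through. The critical shift in perspective is to throw away the internal orderings and compare only the final counts $(p'',q'')$ and $(p',q')$, bounding the ``heavier'' accumulated sequence by the ``lighter'' one plus at most a single lookahead step -- this is the quantitative content of the $RR2$ balance rule, and it is what supplies the factor $2$.
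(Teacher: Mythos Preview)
Your decomposition of the two totals as $A_{p''}+B_{q''}$ and $A_{p'}+B_{q'}$, with $A_p,B_q$ the prefix sums of the original sequences $P_0,P_1$, is not valid in this setting. The sequences $P_0$ and $P_1$ are two permutations of the \emph{same} $n$ variables, and both $RR2$ and the greedy process defining $\Pi'$ skip any variable already selected (the ``not in $\pi$'' clause / ``remove from both'' step). Hence the $q''$ variables that $RR2$ charges to the $P_1$ side are generally \emph{not} the first $q''$ elements of $P_1$: some of those may already have been picked from the $P_0$ side and therefore skipped. For a concrete failure, take $P_0=(a,b,c)$, $P_1=(b,a,c)$ with $c_a=5,\;c_b=1,\;c_c=100$ (realizable with $p_a=0.4,\;p_b=0.9,\;p_c=0.5$) and $k=3$. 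Then $RR2$ picks $b$ from $P_1$, then $a$ from $P_0$, then $c$ from $P_1$, so $p''=1$, $q''=2$, and the final $K_1=c_b+c_c=101$; but $B_2=c_b+c_a=6$. Your formula gives $A_1+B_2=11$ while the true total is $106$. The same issue invalidates your description of the $RR2$ rule as ``pick from $P_0$ iff $A_{p+1}\le B_{q+1}$'': the rule compares the actual running totals $K_0,K_1$ plus the currently available heads, not prefix sums of the fixed original sequences. So the engine of your argument never starts.

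The paper's proof sidesteps this by never appealing to prefix sums. It works directly with the final accumulated costs $C_0,C_1$ and the last chosen variable $\bar x$ (WLOG from $P_0$) together with the next $P_1$-candidate $\hat x$; the balance rule yields $C_1\le C_0+c_{\bar x}\le C_1+c_{\hat x}$. It then argues by cases: if $\bar x\in\Pi'$ one gets $C_0+c_{\bar x}\le\sum_{\Pi'}c_i$, and otherwise $\hat x\in\Pi'$ so $C_1+c_{\hat x}\le\sum_{\Pi'}c_i$; either way the total $C_0+C_1+c_{\bar x}$ is at most $2\sum_{\Pi'}c_i$. The step your approach is missing is a way to relate $RR2$'s picks to $\Pi'$ \emph{as sets of variables}; collapsing everything to counts $(p,q)$ discards precisely the information needed to handle the shared-variable interaction between the two orderings.
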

\begin{proof}
    Let the last variable added to $\Pi''$ be $\bar{x}$ and without loss of generality assume $\bar{x}$ is selected from $P_0$. Let the first remaining variable on the other sequence $P_1$ be $\hat{x}$. Let $C_0, C_1$ denote the accumulated costs on $P_0, P_1$ before adding $\bar{x}$, respectively.
    
    Since $\bar{x}$ is the last variable to be added, $C_1\leq C_0+c_{\bar{x}}\leq C_1+ c_{\hat{x}}$.
    \paragraph{Case 1:} Suppose $\bar{x}\in \Pi'$, $C_0+c_{\bar{x}}\leq \sum_{i:x_i\in\Pi'}c_i$. Hence
    $$C_0+C_1+c_{\bar{x}}\leq 2C_0+c_{\bar{x}}\leq 2 \sum_{i:x_i\in\Pi'}c_i$$
    \paragraph{Case 2:} Suppose $\bar{x}\notin \Pi'$, $\hat{x}$ must in $\Pi'$, and $C_1+c_{\hat{x}}\leq \sum_{i:x_i\in\Pi'}c_i$. Then
    $$C_0+C_1+c_{\bar{x}}\leq 2C_1+c_{\hat{x}}\leq 2\sum_{i:x_i\in\Pi'}c_i$$
\end{proof}

Combining Lemma~\ref{lemma:6} and Lemma~\ref{lemma:7}, we have
\begin{corollary}\label{4times}
    $$\sum_{i:x_i\in\Pi''}c_i\leq 4\sum_{i:x_i\in\Pi}c_i$$
\end{corollary}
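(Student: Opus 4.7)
The plan is essentially to chain the two preceding lemmas: Lemma~\ref{lemma:6} bounds the cost of the greedy two-sequence pick $\Pi'$ by twice the cost of the cheapest-$k$ set $\Pi$, and Lemma~\ref{lemma:7} bounds the cost of the actual round-robin prefix $\Pi''$ by twice the cost of $\Pi'$. Composing these two multiplicative bounds immediately yields the factor of $4$, so no new combinatorial or probabilistic argument is needed and the corollary reduces to a single line of arithmetic.

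Concretely, I would first restate both inequalities in the order that makes the substitution transparent, starting from $\sum_{i:x_i\in\Pi''}c_i\leq 2\sum_{i:x_i\in\Pi'}c_i$ (Lemma~\ref{lemma:7}) and then substituting the bound $\sum_{i:x_i\in\Pi'}c_i\leq 2\sum_{i:x_i\in\Pi}c_i$ (Lemma~\ref{lemma:6}) on the right-hand side to obtain $\sum_{i:x_i\in\Pi''}c_i\leq 4\sum_{i:x_i\in\Pi}c_i$. One small sanity check I would make explicit is that all three sets $\Pi,\Pi',\Pi''$ have exactly $k$ elements and are produced from the same underlying collection of variables, so the costs being compared refer to the same ground set and the chaining is syntactically legitimate.

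There is essentially no obstacle to overcome here, since both lemmas have already done the work: Lemma~\ref{lemma:6} handles the ``two-sequence greedy versus cheapest'' comparison via the $\min\{c_0,c_1\}\leq 2c^*$ estimate, and Lemma~\ref{lemma:7} handles the ``round-robin prefix versus two-sequence greedy'' comparison via the case analysis on whether the last inserted variable lies in $\Pi'$ or not. Therefore the corollary's proof is just the transitive composition of these two $2$-approximation steps, and I would present it as a two-inequality display with a short sentence of justification pointing to the two lemmas.
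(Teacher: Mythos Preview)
Your proposal is correct and matches the paper's own approach exactly: the corollary is stated immediately after Lemma~\ref{lemma:6} and Lemma~\ref{lemma:7} with the remark ``Combining Lemma~\ref{lemma:6} and Lemma~\ref{lemma:7}, we have,'' i.e., the same one-line chaining of the two factor-$2$ bounds that you describe.
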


Now we are ready to prove our Theorem~\ref{thm:2}. In the proof, we will use a similar method as the proof of our Theorem~\ref{theorem:1}, and we will skip some duplicated details.
\begin{theorem}\label{thm:2}
    $RR2$ is a 6-approximation evaluation strategy.
\end{theorem}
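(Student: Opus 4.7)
The plan is to mirror the proof of Theorem~\ref{theorem:1}. For each block $j$ I will establish $E_j(RR2) \le 6\, E_j(\OPT_j)$; summing over $j$ and using $E_j(\OPT_j)\le E_j(\OPT)$ then gives $E(RR2)\le 6\, E(\OPT)$ exactly as at the end of Theorem~\ref{theorem:1}.

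The core step is a per-realization comparison between $RR2$ and $V_j$, reusing the three-case split from the proof of Lemma~\ref{lemma:6-approx-verification} according to $V_j$'s behavior on $a\in M_j$ (Case~1: $V_j$ terminates in phase~1; Case~2: $V_j$ enters phase~2 short of zeros; Case~3: short of ones). In each case I would combine two tools. The first is the two-way round-robin balance: if $\bar{x}$ is the last variable tested by $RR2$, WLOG from $P_0$, and $\sigma_0,\sigma_1$ are the previously tested variables, the selection rule $K_0+c_{\bar{x}}\le K_1+c_{x_{P_1}}$ yields $C_j(RR2,a)\le 2\sum_{x_i\in\sigma_1^+}c_i$, the two-sequence analogue of the $3\sum_{\sigma_c^+}c_i$ bound used for $RR$. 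The second is Corollary~\ref{4times}, which controls the cost of the first $k$ tests of $RR2$ by $4$ times the cost of the $k$ cheapest variables from $P_c$. The containment relations already established in Theorem~\ref{theorem:1}---for instance $\sigma_0^+\subseteq\pi'$ when $x_{P_0}\in\pi'$---then translate prefixes of $P_0$ or $P_1$ into $V_j$'s test set, while Lemma~\ref{lemma:6} provides the crossover between the $P_0,P_1$ orderings and the $P_c$ ordering.

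The target per-realization bound is $C_j(RR2,a)\le 2\, C_j(V_j,a)+2\sum_{i:x_i\in\pi}c_i$; integrating this against $p(a)$ over $M_j$ gives $E_j(RR2)\le 2\, E_j(V_j)+2\bigl(\sum_{i:x_i\in\pi}c_i\bigr)\Pr[a\in M_j]$. Combining Lemma~\ref{lemma:2-approx-verification} with the observation that any verification strategy must pay at least $\sum_{i:x_i\in\pi}c_i$ on every realization (so that $(\sum_{i:x_i\in\pi}c_i)\Pr[a\in M_j]\le E_j(\OPT_j)$, the same bound used in the proof of Lemma~\ref{lemma:2-approx-verification}) then delivers $E_j(RR2)\le 4\, E_j(\OPT_j)+2\, E_j(\OPT_j)=6\, E_j(\OPT_j)$.

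The main obstacle is the bookkeeping required to keep the final constant at $6$ rather than the $4\cdot 2=8$ that a blanket composition of Corollary~\ref{4times} with Lemma~\ref{lemma:2-approx-verification} would produce. The saving requires splitting $V_j$'s cost into its phase-1 piece $\sum_{i:x_i\in\pi}c_i$ and its phase-2 piece, and charging the $4\times$ loss from Corollary~\ref{4times} only against the phase-1 piece---which itself lower-bounds $E_j(\OPT_j)$---while the phase-2 piece is charged at the cheaper $2\times$ rate coming from the two-way round-robin balance. This is the place where the sub-case split inside Cases~2 and 3 (analogous to the split on whether $x_{P_c}\in\pi$ or $x_{P_0}\in\pi'$ used in Theorem~\ref{theorem:1}'s proof) has to be executed carefully, and it is the step I expect to be the most technically delicate.
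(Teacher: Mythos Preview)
Your plan diverges from the paper's proof in one crucial structural choice, and that choice is exactly where the gap lies. You compare $RR2$ directly to $V_j$, whose phase~1 tests the set $\pi$ of the $\alpha_j+\beta_j$ cheapest variables in the $P_c$ order. The paper instead introduces an auxiliary verification strategy $V'$ whose phase~1 is \emph{identical} to the first $\alpha_j+\beta_j$ tests of $RR2$ (and whose phase~2 then follows the pure $P_0$ or $P_1$ order). Because $V'$ and $RR2$ share phase~1, Corollary~\ref{4times} bounds that common phase-1 cost by $4\sum_{x_i\in\pi}c_i\le 4E_j(\OPT_j)$, and the zero-cost argument of Lemmas~\ref{lemma:0-cost}--\ref{lemma:vi-optimal} bounds the expected phase-2 cost of $V'$ by $E_j(\OPT_j)$. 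Only then is the round-robin balance used, and only to compare phase~2 of $RR2$ to phase~2 of $V'$ (a factor $2$), since both strategies have already tested exactly the same set of variables at the start of phase~2.

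Your target per-realization inequality $C_j(RR2,a)\le 2\,C_j(V_j,a)+2\sum_{x_i\in\pi}c_i$ does not follow from the tools you list. In Case~1 it reduces to $C_j(RR2,a)\le 4\sum_{x_i\in\pi}c_i$, but Corollary~\ref{4times} only controls the cost of the first $\alpha_j+\beta_j$ tests of $RR2$, not its termination time; even when $\pi$ already contains $\alpha_j$ ones and $\beta_j$ zeros, $RR2$ may test variables outside $\pi$ first and take more than $\alpha_j+\beta_j$ tests to finish, so an extra argument is needed to bound the overrun. Likewise, your proposed ``$2\times$ phase-2'' charge compares phase~2 of $RR2$ (which skips the variables $RR2$ tested in its own phase~1) to phase~2 of $V_j$ (which skips $\pi$); since these two skipped sets differ, the containment arguments $\sigma_0^+\subseteq\pi'$ borrowed from Lemma~\ref{lemma:6-approx-verification} do not transfer. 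The missing idea is precisely to align the two phase-1 sets by replacing $V_j$ with $V'$.
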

\begin{proof}
    For the analysis purpose, we first introduce an adaptive verification strategy $V'$ for any block $j$. The strategy uses the $RR2$ algorithm to get $\alpha_j+\beta_j$ variables, which we call phase 1. Then, if we have not yet verified that an assignment is in block $j$, we must either lack 1's or lack 0's, and can not lack both. In phase 2, if we lack 1's, we test the remaining variables in the increasing order of $\frac{c_i}{p_i}$; if we lack 0's, we test the remaining variables in the increasing order of $\frac{c_i}{1-p_i}$.
    
    We can use a similar argument in the proof of Lemma~\ref{lemma:6-approx-verification} to show that this adaptive verification strategy yields a 5-approximation with respect to the optimal verification strategy for any block $j$. The reason is that, any verification strategy for block $j$ must test at least $\alpha_j+\beta_j$ variables. Since we test $\alpha_j+\beta_j$ variables in phase 1 using $RR2$, we can see the cost we spend in phase 1 is upper-bounded by 4 times the optimal verification cost because of Corollary~\ref{4times}. Then, using a similar argument of Lemma~\ref{lemma:0-cost}, the cost we spend in phase 2 is upper-bounded by the optimal verification cost. Hence, the verification strategy $V'$ yields a 5-approximation.
    
    Now we can think that, in phase 2, we use a modified round-robin approach to test the remaining variables. Also, it is crucial that the modified round-robin approach ``inherit'' the accumulated costs in phase 1. Hence, this modification gives us the exact Algorithm~\ref{alg:RR2}. 
    
    % We can observe that the variables we test in phase 1 are fixed.
    It is important to observe that the cost of our strategy is spent either on $P_0$ or on $P_1$, which we can analyze separately.
    Since the variables which $RR2$ selects in phase 1 are fixed, we let $\pi_0, \pi_1$ denote the variables that are from $P_0, P_1$, respectively.
    Without loss of generality, suppose for an assignment $a$ we still need some 1's after phase 1. Strategy $V'$ will then test variables using sequence $P_1$. Let $\pi$ denote the variables strategy $V'$ tests in phase 2 on $P_1$; then the cost of $V'$ on $a$ equals $\sum_{i:x_i\in \pi_0\cup\pi_1\cup\pi}c_i$. 
    If we run $RR2$ on this realization $a$, 
    when the test ends, the cost spent on $P_1$ is at most $\sum_{i:x_i\in \pi_1\cup \pi}$, because otherwise all tests in $\pi$ have been tested and the testing process should have ended. Also, from the modified round-robin strategy we also know the cost spent on $P_0$ is also at most $\sum_{i:x_i\in \pi_1\cup \pi}$. Hence, for any realizations, $RR2$ spends at most twice the cost in phase 2 than $V'$. Since the expected cost of $V'$ spent in phase 2 is upper-bounded by the optimal verification cost, the expected cost of phase 2 in $RR2$ is upper-bounded by twice the optimal verification cost. Combing with the cost spent in phase 1, Algorithm~\ref{alg:RR2} is a 6-approximation verification strategy for any block $j$. Since Algorithm~\ref{alg:RR2} does not depend on $\alpha_j$ and $\beta_j$, it is also a 6-approximation evaluation strategy.
\end{proof}

\section{Discussion}

We showed two constant-approximation algorithms for the unweighted SSClass problem with arbitrary cost. In our proof, we first focus on the verification strategies for any block $j$. Since our strategies do not use any information of block $j$ (e.g., $\alpha_j, \beta_j$), we could then use the same strategy for all $B$ blocks, hence the approximation factor we achieved for the verification problem can be extended to the evaluation problem.

Although our strategies suggest that it is possible to obtain strategies with small approximation factors for the unweighted SSClass problem, we do not know if we can achieve the same approximation factor for the arbitrary cost case and the unit cost case. In fact, the two cases have some crucial differences which prevent us from applying similar techniques in proof of the unit cost case to the arbitrary cost case.
For example, when all variables have unit testing costs, the optimal strategy of verifying at least $\alpha_j$ of 1's for all assignments in block $j$ is to test variables in the increasing order of $1-p_i$. However, we do not know the optimal strategy for the same problem in the arbitrary cost case; a trivial extension of testing in the increasing order of $\frac{c_i}{p_i}$ is not optimal, which can be proved by counter-examples. It would be interesting to see if one could further improve this approximation factor for the arbitrary cost case of the problem to, e.g., 2, which is obtained for the unit cost case unweighted SSClass.

Another interesting problem is to show the complexity of the unweighted SSClass problem. It is known that the weighted version is NP-hard since the special case when $B=2$ corresponds to the SBFE problem for linear threshold formulas which is NP-hard. However, it is open whether finding an optimal adaptive/non-adaptive strategy for the unweighted SSClass is NP-hard, even in the unit cost case.

\bibliographystyle{alpha}
\bibliography{IEEEabrv,bibliography}

\newcommand{\etalchar}[1]{$^{#1}$}
\begin{thebibliography}{GGHK22}

\bibitem[AHK{\"U}17]{allen2017evaluation}
Sarah~R Allen, Lisa Hellerstein, Devorah Kletenik, and Tongu{\c{c}}
  {\"U}nl{\"u}yurt.
\newblock Evaluation of monotone dnf formulas.
\newblock {\em Algorithmica}, 77(3):661--685, 2017.

\bibitem[AJO11]{DBLP:conf/allerton/AcharyaJO11}
Jayadev Acharya, Ashkan Jafarpour, and Alon Orlitsky.
\newblock Expected query complexity of symmetric boolean functions.
\newblock In {\em 49th Annual Allerton Conference on Communication, Control,
  and Computing, Allerton 2011, Allerton Park {\&} Retreat Center, Monticello,
  IL, USA, 28-30 September, 2011}, pages 26--29. {IEEE}, 2011.

\bibitem[BD81]{BenDov81}
Yosi Ben-Dov.
\newblock Optimal testing procedure for special structures of coherent systems.
\newblock {\em Management Science}, 1981.

\bibitem[DJO{\etalchar{+}}12]{DBLP:conf/isit/DasJOPS12}
Hirakendu Das, Ashkan Jafarpour, Alon Orlitsky, Shengjun Pan, and
  Ananda~Theertha Suresh.
\newblock On the query computation and verification of functions.
\newblock In {\em Proceedings of the 2012 {IEEE} International Symposium on
  Information Theory, {ISIT} 2012, Cambridge, MA, USA, July 1-6, 2012}, pages
  2711--2715. {IEEE}, 2012.

\bibitem[GGHK18]{DBLP:conf/esa/GkenosisGHK18}
Dimitrios Gkenosis, Nathaniel Grammel, Lisa Hellerstein, and Devorah Kletenik.
\newblock The stochastic score classification problem.
\newblock In Yossi Azar, Hannah Bast, and Grzegorz Herman, editors, {\em 26th
  Annual European Symposium on Algorithms, {ESA} 2018, August 20-22, 2018,
  Helsinki, Finland}, volume 112 of {\em LIPIcs}, pages 36:1--36:14. Schloss
  Dagstuhl - Leibniz-Zentrum f{\"{u}}r Informatik, 2018.

\bibitem[GGHK22]{DBLP:journals/dam/GkenosisGHK22}
Dimitrios Gkenosis, Nathaniel Grammel, Lisa Hellerstein, and Devorah Kletenik.
\newblock The stochastic boolean function evaluation problem for symmetric
  boolean functions.
\newblock {\em Discret. Appl. Math.}, 309:269--277, 2022.

\bibitem[GGN22]{DBLP:conf/ipco/GhugeGN22}
Rohan Ghuge, Anupam Gupta, and Viswanath Nagarajan.
\newblock Non-adaptive stochastic score classification and explainable
  halfspace evaluation.
\newblock In Karen Aardal and Laura Sanit{\`{a}}, editors, {\em Integer
  Programming and Combinatorial Optimization - 23rd International Conference,
  {IPCO} 2022, Eindhoven, The Netherlands, June 27-29, 2022, Proceedings},
  volume 13265 of {\em Lecture Notes in Computer Science}, pages 277--290.
  Springer, 2022.

\bibitem[GHKL22]{DBLP:journals/algorithmica/GrammelHKL22}
Nathaniel Grammel, Lisa Hellerstein, Devorah Kletenik, and Naifeng Liu.
\newblock Algorithms for the unit-cost stochastic score classification problem.
\newblock {\em Algorithmica}, 84(10):3054--3074, 2022.

\bibitem[PS22]{https://doi.org/10.48550/arxiv.2211.14082}
Benedikt~M. Plank and Kevin Schewior.
\newblock Simple algorithms for stochastic score classification with small
  approximation ratios, 2022.

\bibitem[SB97]{Salloum97}
Salam Salloum and Melvin~A. Breuer.
\newblock Fast optimal diagnosis procedures for k-out-of-n:g systems.
\newblock {\em IEEE Transactions on Reliability}, 46(2):283--290, Jun 1997.

\end{thebibliography}
\end{document}